\theoremstyle{plain}
\newtheorem{remark}{Remark} 
\newtheorem{proposition}{Proposition}
\begin{document}




\title{\vspace*{-2cm}\sc An $n$-dimensional Rosenbrock Distribution for MCMC Testing}


\author{Filippo Pagani\footnote{Department of Mathematics, The University of Manchester, Manchester, M13 9PL, UK.} \footnote{Corresponding Author: filippo.pagani@manchester.ac.uk} \and Martin Wiegand$^*$ \and Saralees Nadarajah$^*$
}
\maketitle

%
\begin{abstract}
The Rosenbrock function is an ubiquitous benchmark problem for numerical optimisation, and variants have been proposed to test the performance of Markov Chain Monte Carlo algorithms. In this work we discuss the two-dimensional Rosenbrock density, its current $n$-dimensional extensions, and their advantages and limitations. 
We then propose a new extension to arbitrary dimensions called the Hybrid Rosenbrock distribution, which is composed of conditional normal kernels arranged in such a way that preserves the key features of the original kernel. Moreover, due to its structure, the Hybrid Rosenbrock distribution is analytically tractable and possesses several desirable properties, which make it an excellent test model for computational algorithms. 


\end{abstract}
\noindent%
{\it Keywords:}  Algorithm testing, Benchmarking, Markov chain Monte Carlo, Rosenbrock function.


\FloatBarrier

\section{Introduction}

\label{Int}

The Rosenbrock function is a popular test problem in the optimisation literature \cite{rosenbrock1960} due to its challenging features: its minimum is located at the bottom of a long and narrow parabolic valley. 
The original function can be turned into a probability density that maintains these features, and has been adopted by the Markov chain Monte Carlo (MCMC) community to serve as a benchmark problem when testing MCMC algorithms (see for example \cite{goodman2010}). 

The first MCMC method dates back to the 1950s, when the scientists working on the Manhattan Project in Los Alamos used a Random Walk Metropolis (RWM) to simulate systems of particles \cite{metropolis1953}. After that, MCMC remained largely confined to the physics literature until the 1990s, when \cite{gelfand-smith1990} popularised it to the statistics community. This spurred a new wave of research efforts that yielded advanced algorithms such as the Metropolis-adjusted Langevin algorithm (MALA) \cite{roberts1997}, Reversible Jumps MCMC \cite{green1995}, Hamiltonian Monte Carlo (HMC) \cite{duane1987, neal2010} 
among others. See \cite{robert-casella2011} for a historical perspective. 

One of the current frontiers of research in this field is developing algorithms (e.g.\ \cite{girolami2011} and \cite{parno2014}) that can sample efficiently from densities that have 2-$d$ marginals with non-constant or curved correlation structure (see e.g.\ Figure \ref{2drosenIm}). Such shapes make it difficult for MCMC algorithms to take large steps, increasing the autocorrelation time and decreasing the quality of the MCMC sample.

Distributions with curved correlation structures often arise when dealing with complex or hierarchical models, typically found in cosmology \cite{des2017}, epidemiology \cite{house2016}, chemistry \cite{cotter2019}, finance \cite{kim1998}, biology \cite{christensen2003,sullivan2010}, ecology \cite{rockwood2015}, particle physics \cite{feroz2008,allanach2007} and many other subject areas. Sometimes reparametrising the model can map the problematic components to more linear shapes, but it is not always possible, and the reparametrisation may not solve the problem entirely.

Researchers developing new methods for distributions with non-linear correlation structure often test their algorithms on only a handful of benchmark models, amongst which the (2-dimensional) Rosenbrock kernel is quite popular \cite{hogg2018}. However, few properties of the Rosenbrock kernel have been investigated and formalised, especially regarding multivariate extensions of the density for the purpose of MCMC sampling. As we will show in Section \ref{fullrosen}, sometimes the properties of this distribution are so poorly understood that extending the kernel from two to three dimensions radically changes its shape.

In our view, a good benchmark model must possess five main properties. $i$) It must have complex marginal density structure. $ii$) It must be easily extendable to arbitrary dimensions. $iii$) The normalising constant must be known. $iv$) The effect that the parameters have on the shape of the distribution must be clear. $v$) It must be easy to obtain a direct Monte Carlo sample from the test distribution.
A benchmark model that possesses these properties can be tweaked to obtain exactly the shape and features desired, and it provides analytical solutions and large $\emph{iid}$ samples that can be compared with a sample drawn from the same model using the MCMC algorithm being tested. Conversely, a model that behaves uncontrollably when changing its parameters is a terrible benchmark model.

In this work we present the Hybrid Rosenbrock distribution, a benchmark model that possesses all the properties outlined above. The Hybrid Rosenbrock distribution can be used by researchers developing new MCMC methods to test how algorithms perform on distributions with curved 2-$d$ marginals. However, the Hybrid Rosenbrock distribution could also be used by savvy MCMC practitioners to perform algorithm selection. The shape and features of the Hybrid Rosenbrock can be tweaked to match those of the model of interest, which would provide the practitioners with a tailor made toy problem to test their algorithm of choice, and assess how well it performs when compared with the true solution. 

Moreover, the Hybrid Rosenbrock distribution can be used to test the accuracy of algorithms that 
estimate the normalising constant of a kernel \cite{gelfand-smith1990, satagopan2000}. Prominent approaches include \cite{chib1995, diciccio1997} and \cite{jasra2006}, among various other contributions. Due to the number of approaches suggested, having a challenging benchmark problem for which the normalising constant is known would prove a valuable assessment tool.

The structure of the paper is as follows. In Section \ref{sec:tools} we review the main computational tool we used in this work, i.e.\ the simplified Manifold MALA algorithm. In Section \ref{currentlit} we review the current literature on 2-$d$ Rosenbrock distributions and available $n$-dimensional extensions. In Section \ref{normalis} we show how to calculate the normalising constant for the 2-$d$ case, and why it cannot be calculated in the same way for the other variants of the Rosenbrock kernel in the literature. In Section \ref{hybridrosen} we present our $n$-dimensional extension, and discuss how it improves on the shortcomings of current solutions. In Section \ref{numtest} we discuss how changes in the structure and shape of the Hybrid Rosenbrock densitiy affect how challenging it is to obtain an MCMC sample from it.

\section{Tools for testing}
\label{sec:tools}

As none of the distributions listed in Section \ref{currentlit} has known normalisation constant, MCMC methods were used to infer their shapes and produce the figures. All the distributions described in this work have very peculiar features, i.e.\ they look like thin and elongated curved ridges, which are usually problematic to explore for MCMC algorithms. For this reason we selected a state of the art algorithm to perform our analysis, which is one of the few MCMC methods that can successfully cope with those features: the Simplified Manifold MALA (sMMALA) \cite{girolami2011}. 

The sMMALA algorithm, based on the MALA algorithm \cite{roberts1997}, is part of a class of methods that use local information about the target when proposing a move in the state space. 
The sMMALA algorithm will propose a new position $\mathbf{x}'$ in the state space from the current position $\mathbf{x}$ according to the equation
\begin{equation}
\label{smmala}
\mathbf{x}' = \mathbf{x} + \frac{h}{2} \Sigma(\mathbf{x}) \nabla \log \pi(\mathbf{x}) + \mathcal{N}_n (0, h \, \Sigma(\mathbf{x})) \, , \qquad \mathbf{x} \in \mathbb{R}^n \, .
\end{equation}
Here $\pi(\mathbf{x})$ is the distribution of interest, $\nabla$ represents the gradient operator, $\Sigma(\mathbf{x})$ is a positive definite matrix, and $h \in \mathbb{R}^+$ is the step size of the algorithm, parameter tuned by the user to achieve the desired level of acceptance. The proposed $\mathbf{x}'$ then is accepted with a Metropolis acceptance/rejection step, which ensures that the sMMALA sample comes from the correct stationary distribution $\pi(\mathbf{x})$.


A common choice of $\Sigma(\mathbf{x})$ is the Fisher Information matrix (i.e. the negative expectation of the Hessian of the log-likelihood) \cite{girolami2011}, as it carries information on the local correlation structure of the target. In our case the most convenient choice of $\Sigma(\mathbf{x})$ is given in \cite{betancourt2013}, which uses a regularised version of the Hessian of the log-density derived by multiplying its eigenvectors by the absolute value of the eigenvalues. If the eigenvalues are too small, the eigendecomposition may be unstable, so the algorithm regularises the Hessian further by increasing the problematic eigenvalues by a factor of $1/\alpha$, where $\alpha$ is a user defined parameter.

In the rest of this work, sMMALA will be our main tool to infer the shape of a distribution and perform computational tests.

\section{Current literature}
\label{currentlit}

\subsection{The 2-$d$ Rosenbrock distribution}
\label{rosen1}

\begin{figure}[!h] 
	\centering
	\includegraphics[scale=0.6]{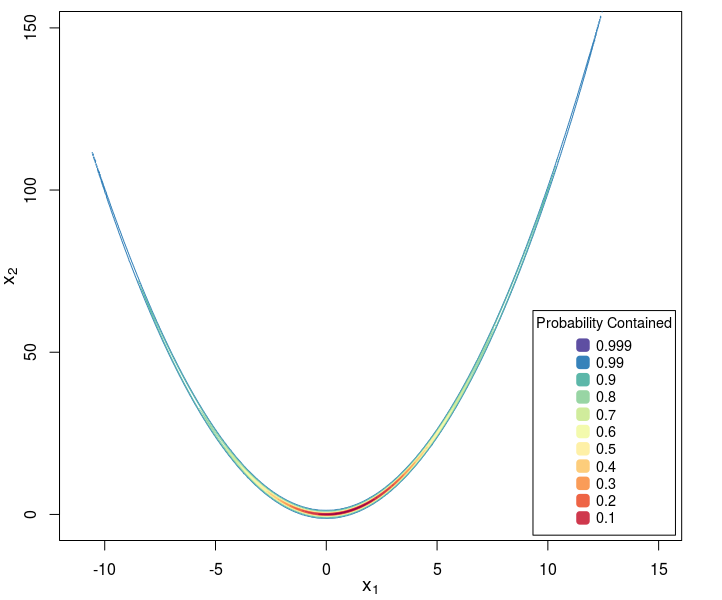}
	\caption{ Contour plot of the 2-$d$ Rosenbrock density as described in Equation \eqref{2drosen}. 
	}
	\label{2drosenIm}
\end{figure}

The simplest non-trivial case of the Rosenbock distribution is the 2-$d$ case, where the kernel can be written as 
\begin{equation}
\label{2drosen}
\pi(x_1,x_2) \propto \exp \left \{ - [ 100 \, (x_2-x_1^2)^2 + (1-x_1)^2]/20 \right \} \, , \qquad x_1,x_2 \in \mathbb{R} \, .
\end{equation}
We follow \cite{goodman2010} when rescaling Equation \eqref{2drosen} by $1/20$, so that the distribution takes the shape of a curved narrow ridge -- shown in Figure \ref{2drosenIm} -- which is normally quite challenging for MCMC algorithms to explore.

It is not clear from the literature how the shape of the kernel in \eqref{2drosen} is affected by changes in the coefficients. Moreover, the normalising constant is generally unknown, and there is more than one way to extend the distribution beyond two dimensions. Two methods have been proposed in the literature, and we will review them to point out their advantages and limitations.

\subsection{Full Rosenbrock distribution}
\label{fullrosen}

We will refer to the $n$-dimensional extension in \cite{goodman2010} as ``Full Rosenbrock" kernel in the following paragraphs. 
The kernel
has the following structure:
\begin{equation}
\label{fullRosen}
\pi(\mathbf{x}) \propto \exp \left \{ - \sum_{i=1}^{n-1} \left [ 100 \, (x_{i+1}-x_i^2)^2 + (1-x_i)^2 \right ]/20 \right \} \, \qquad \mathbf{x}=[x_1,\dots,x_n]^\top \in \mathbb{R}^n \, .
\end{equation}
The normalising constant is unknown. In three dimensions the kernel above can be written as
\begin{equation}
\label{3dfull-rosen}
\pi(\mathbf{x}) \propto \exp \left \{ - \left [ 100 \, (x_2-x_1^2)^2 + (1-x_1)^2 + 100 \, (x_3-x_2^2)^2 + (1-x_2)^2 \right ]/20 \right \} \, \qquad \mathbf{x} \in \mathbb{R}^3 \, .
\end{equation}
Figure \ref{2drosen-fullGW} shows contour plots of a 2 million sample obtained running a sMMALA algorithm on Equation \eqref{3dfull-rosen}, with starting point $\mathbf{x}=[1,\ldots,1]^\top$, $h=1.5$, and $\alpha=10^6$.

\begin{figure}[!h] 
	\includegraphics[scale=0.7]{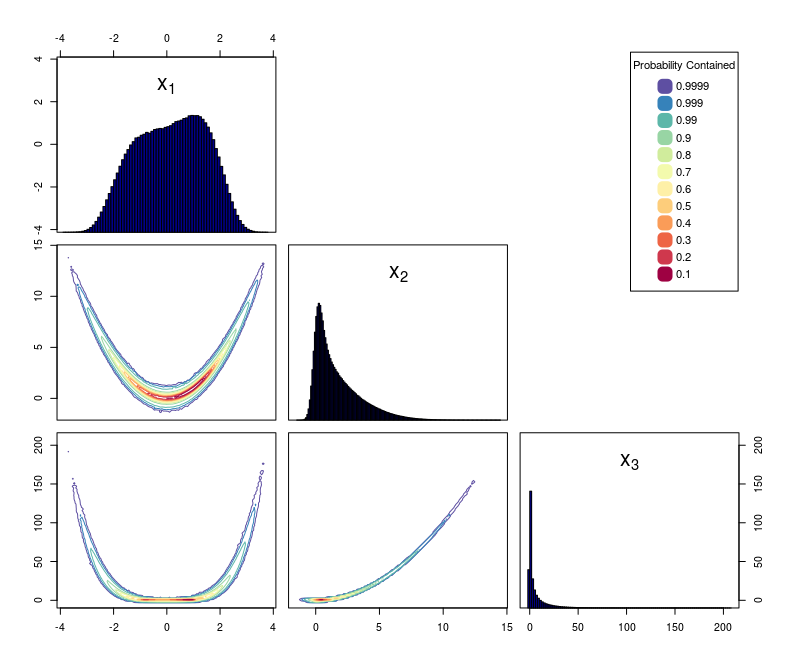}
	\centering
	\caption{Contour plot of a 3-$d$ Full Rosenbrock density, as described in Equation \eqref{3dfull-rosen}, obtained from a sMMALA MCMC sample. }
	\label{2drosen-fullGW}
\end{figure}

We want to draw attention to the joint distribution of the first two random variables $x_1,x_2$ from Figure \ref{2drosen-fullGW}, which we show in more detail in Figure \ref{2drosenCI}, and we want to compare them to the same two variables from the 2-$d$ Rosenbrock kernel. 
Evidently, extending the kernel from a 2-$d$ Rosenbrock to a 3-$d$ Full Rosenbrock significantly changes the joint plot between the variables $x_1$ and $x_2$: the long narrow ridge
has become much more concentrated around the mode. 
Moreover, the specific change in shape significantly reduces the difficulty of sampling from the distribution via MCMC methods, and is directly against one of our requirements for a good test problem.

\begin{figure}[!h] 
	\includegraphics[scale=0.6]{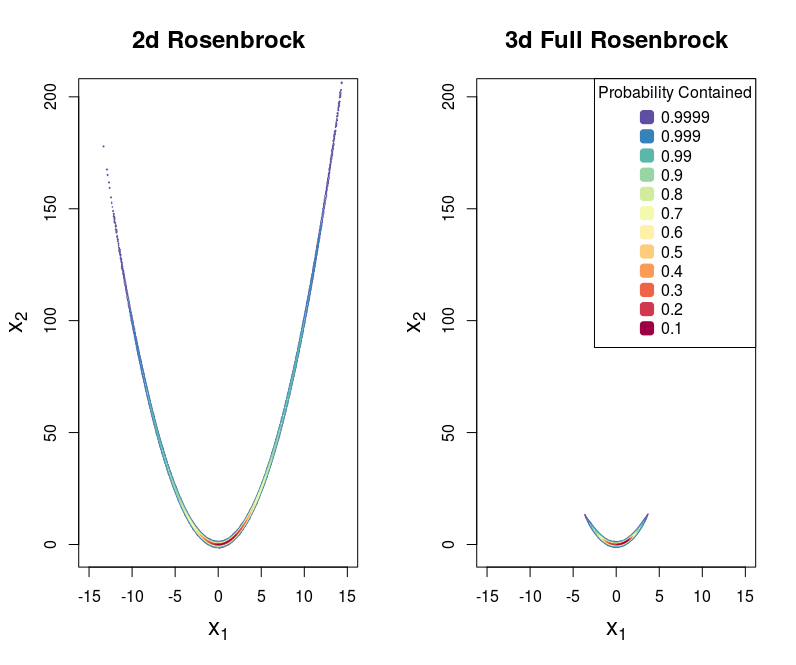}
	\centering
	\caption{Contour plots of a 2-$d$ Rosenbrock density as described in Equation \eqref{2drosen}, and of the $x_1$ and $x_2$ variables from a 3-$d$ Full Rosenbrock kernel from Equation \eqref{3dfull-rosen}.}
	\label{2drosenCI}
\end{figure}

However, the Full Rosenbrock kernel does have some desirable features: as $n$ increases, the variance of $x_n$ increases steeply, as each new random variable is directly dependent on the squared value of the previous variable. Densities with such properties (e.g. Neal's Normal \cite{neal2010}) are known to pose a challenge to MCMC algorithms. 

It should be noted that with variances increasing steeply as $n$ increases (depending on the choice of parameters), the Full Rosenbrock kernel may not be extended beyond a certain number of dimensions, as the numerical computations may become unstable. Model 5 in Section \ref{sensitivityAnalysis} suffers from the same drawback. One possible solution to this problem is adopting a ``block structure", a feature typical of the Even Rosenbrock kernel described in the next section.




\subsection{Even Rosenbrock distribution}
\label{evenrosen}

In the optimisation literature, \cite{dixon1994} proposes a simpler version of the Full Rosenbrock function used in Section \ref{fullrosen}, which can be turned into a kernel as
\begin{equation}
\label{ndrosen-even}
\pi(\mathbf{x}) \propto \exp \left \{ - \sum_{i=1}^{n/2} \left [ (x_{2i-1} - \mu_{2i-1})^2 - 100 \, (x_{2i}-x_{2i-1}^2)^2 \right ] /20 \right \} \, , \qquad \mathbf{x} \in \mathbb{R}^{n}
\end{equation}
where $n$ must be an even number, and we maintain the $1/20$ mentioned in the previous section. The normalising constant is unknown. This density could be described as the product of $n/2$ independent 2-$d$ Rosenbrock kernels. Figure \ref{4drosen-even} shows the shape of the 2-$d$ marginal distributions\footnote{The contours were plotted using a sample from a sMMALA algorithm tuned exactly as described in the previous section, with $\alpha=10^6$, $\mathbf{x}=\underbar{1}$ and acceptance ratio roughly 50\%.} of \eqref{ndrosen-even} when taking $n=4$ and $\mu_1=\mu_3=1$, which result in the kernel
\begin{equation}
\label{5drosen-even}
\pi(\mathbf{x}) \propto \exp \left \{ - \left [ (x_1-1)^2 + 100 \, (x_2-x_1^2)^2 + (x_3-1)^2 + 100 \, (x_4-x_3^2)^2 \right ]/20 \right \} \, .
\end{equation}

\begin{figure}[!h] 
	\includegraphics[scale=0.7]{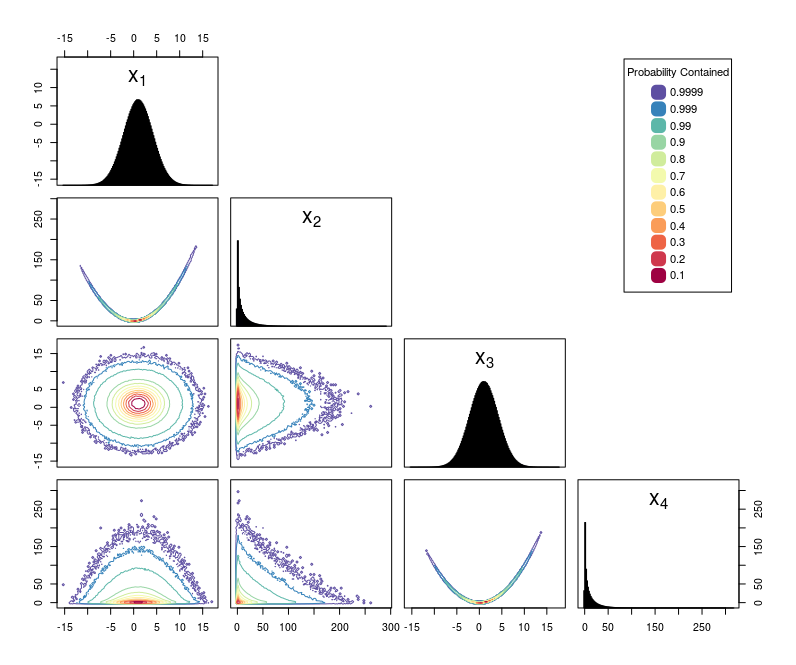}
	\centering
	\caption{Contour plot of a 4-$d$ Even Rosenbrock density, as described in Equation \eqref{ndrosen-even}. Most of the joint distributions are uncorrelated. 
	}
	\label{4drosen-even}
\end{figure}

Equation \eqref{ndrosen-even} represents a more straightforward problem than Equation \eqref{3dfull-rosen}. The round shapes and lack of ridges in the lower left plots of Figure \ref{4drosen-even} (specifically for the pairs $(x_1,x_3), (x_1,x_4), (x_2,x_3)$ and $(x_2,x_4)$ ) confirm the lack of complex dependencies that characterise the Full Rosenbrock kernel. Another important difference is that, unlike the Full Rosenbrock kernel, the Even Rosenbrock does maintain the shape of the joint 2-$d$ marginals as new dimensions are added. Moreover, the variance of the variable $x_n$ remains stable as $n$ grows large.

However, the downside of this model is that only a small fraction of the joint distributions (e.g. the 2-$d$ marginals of $(x_1,x_2)$ and $(x_3,x_4)$ ) will be curved narrow ridges, while the majority of the 2-$d$ marginals will be uncorrelated. Furthermore most of the variables will have similar variances, which also significantly reduces the difficulty of the problem.

\section{Normalising constant and interpretation of the parameters}
\label{normalis}

The normalising constant is unknown for all the examples covered in Section \ref{currentlit}. However, we found that we can rewrite the 2-$d$ Rosenbrock kernel from Equation \eqref{2drosen} in general form as:
\begin{align}
\label{2drosen-norm}
\pi(x_1,x_2) & \propto \exp \left \{ - a (x_1-\mu)^2 - b (x_2-x_1^2)^2 \right \} \nonumber \\
& \propto \exp \left \{ -\frac{1}{2\frac{1}{2a}} (x_1-\mu)^2 - \frac{1}{2\frac{1}{2b}} (x_2-x_1^2)^2 \right \} \, .
\end{align}
where $a=1/20$, $b=100/20$, and $\mu=1$, and more generally $a,b \in \mathbb{R}^+, \mu \in \mathbb{R}, \, x_1,x_2 \in \mathbb{R}$. Equation \eqref{2drosen-norm} should make it obvious that the density is composed of two normal kernels, i.e. $\pi(x_1,x_2) = \pi(x_1) \pi(x_2|x_1)$, where
\begin{equation*}
\pi(x_1) \sim \mathcal{N} \left (\mu, \frac{1}{2a} \right ) \, , \qquad \pi(x_2|x_1) \sim \mathcal{N} \left (x_1^2,\frac{1}{2b} \right ) \, .
\end{equation*}
Interpreting the 2-$d$ Rosenbrock density as the composition of two normal kernels
allows us to calculate the normalisation constant as follows. \\

\begin{proposition}
\label{2drosen-proposition}
The normalisation constant of the 2-$d$ Rosenbrock kernel as shown in Equation \eqref{2drosen-norm} is $\sqrt{ab} / \pi$.
\end{proposition}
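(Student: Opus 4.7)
The plan is to exploit the factorization the authors have just established, namely $\pi(x_1,x_2) = \pi(x_1)\pi(x_2 \mid x_1)$ with $x_1 \sim \mathcal{N}(\mu, 1/(2a))$ and $x_2 \mid x_1 \sim \mathcal{N}(x_1^2, 1/(2b))$. Since the normalization constants of two univariate Gaussians are known, I would just apply Fubini and reduce the problem to computing the integral of the unnormalized kernel on $\mathbb{R}^2$ as an iterated pair of one-dimensional Gaussian integrals.

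Concretely, I would first hold $x_1$ fixed and evaluate the inner integral over $x_2$. A translation $u = x_2 - x_1^2$ shows
\begin{equation*}
\int_{-\infty}^{\infty} \exp\bigl\{ -b(x_2 - x_1^2)^2 \bigr\}\, dx_2 \;=\; \int_{-\infty}^{\infty} e^{-bu^2}\, du \;=\; \sqrt{\pi/b},
\end{equation*}
where the crucial point is that the shift $x_1^2$ disappears under the change of variables, so the inner integral does not depend on $x_1$. I would then integrate the remaining factor $\exp\{-a(x_1-\mu)^2\}$ over $x_1$ to get $\sqrt{\pi/a}$. Multiplying these yields $\int_{\mathbb{R}^2} \exp\{-a(x_1-\mu)^2 - b(x_2-x_1^2)^2\}\, dx_1\, dx_2 = \pi/\sqrt{ab}$, so the multiplicative constant required to turn the kernel into a density is $\sqrt{ab}/\pi$, as claimed.

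There really is no hard step here: the argument is a direct Gaussian computation once the conditional-Gaussian viewpoint is in place. The only conceptual point worth emphasizing is the translation invariance of the inner integral, which is precisely what makes the nonlinear mean $x_1^2$ harmless for normalization. I would flag this observation because it is exactly the mechanism that will later allow the Hybrid Rosenbrock distribution in Section \ref{hybridrosen} to admit a closed-form normalizing constant in arbitrary dimension.
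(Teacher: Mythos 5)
Your proposal is correct and follows essentially the same route as the paper's proof: factor the exponential, apply the change of variables $v = x_2 - x_1^2$ so the inner Gaussian integral is independent of $x_1$, evaluate the two one-dimensional Gaussian integrals to obtain $\pi/\sqrt{ab}$, and take the reciprocal. Your closing remark about translation invariance of the inner integral being the mechanism that generalizes to the Hybrid Rosenbrock case matches the paper's later use of the same idea in Appendix \ref{proof:constHybridProposition}.
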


\begin{proof}
We begin by integrating Equation \ref{2drosen-norm} over the domain $\mathbb{R}^2$:
\begin{align*}
\int_{-\infty}^\infty \int_{-\infty}^\infty & \exp \left \{ -\frac{1}{2\frac{1}{2a}} (x_1-\mu)^2 - \frac{1}{2\frac{1}{2b}} (x_2-x_1^2)^2 \right \} \, dx_2 \, dx_1 = \\
& = \int_{-\infty}^\infty  \exp \left \{ -\frac{1}{2\frac{1}{2a}} (x_1-\mu)^2 \right \} \int_{-\infty}^\infty \exp \left \{ - \frac{1}{2\frac{1}{2b}} (x_2-x_1^2)^2 \right \} \, dx_2 \, dx_1 \, .
\end{align*}
We can apply a change of variables in the second integral, $v=x_2-x_1^2$, which becomes
\begin{align*}
& = \int_{-\infty}^\infty  \exp \left \{ -\frac{1}{2\frac{1}{2a}} (x_1-\mu)^2 \right \} \int_{-\infty}^\infty \exp \left \{ - \frac{1}{2\frac{1}{2b}} v^2 \right \} \, dv \, dx_1 \, ,
\end{align*}
expression that highlights the two kernels $x_1 \sim \mathcal{N}(\mu, 1/2a)$ and  $v \sim \mathcal{N}(0,1/2b)$. Solving the integrals individually,
\begin{align*}
& = \sqrt {2\pi \frac{1}{2a} } \, \sqrt {2\pi \frac{1}{2b} } \qquad \qquad \qquad \qquad \qquad \qquad \qquad \qquad \qquad \\
& = \frac{\pi}{\sqrt{ab}} \, .
\end{align*}
The reciprocal of this number is the normalisation constant.
\end{proof}

Interpreting the 2-$d$ Rosenbrock density as the composition of two normal kernels
also provides us with a simple interpretation for the coefficients: $1/2a$ is the variance of the first normal distribution in the $x_1$ dimension, while $1/2b$ is the variance of the second normal distribution, which is concentrated in the $x_2$ dimension around the manifold given by the parabola $x_1^2$. Therefore, increasing $2a$ increases the slope of the distribution \emph{along} the parabola, while increasing $2b$ decreases the dispersion \emph{around} the parabola. Naturally, the variances of the marginals will be slightly different from those of the conditionals, as the 2-$d$ distribution has to be projected on the corresponding axes. The parameter $\mu$ determines the position on the mode of the variable $x_1$ along the parabola. In Figure \ref{2drosen-parComparison} we show how changing the parameters $\mu$, $a$ and $b$ influences the shape of the 2-$d$ Rosenbrock distribution.

\begin{figure}[!h] 
	\includegraphics[scale=0.75]{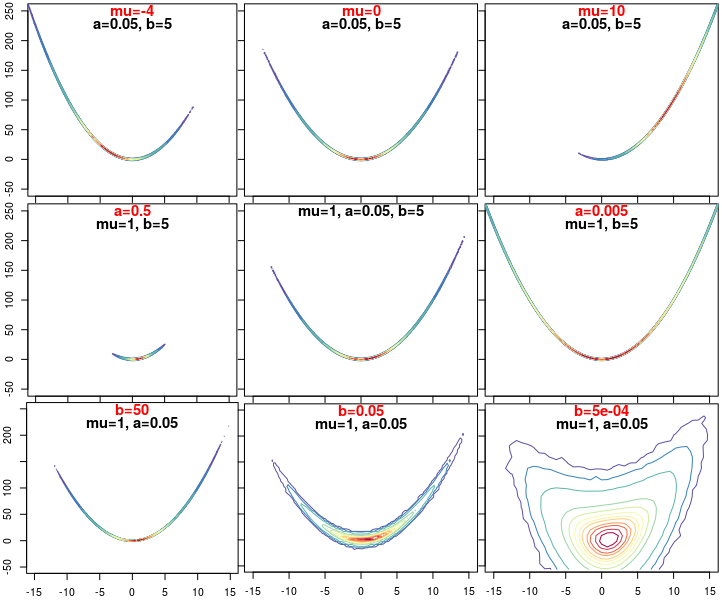}
	\centering
	\caption{Contour plot for the variables ($x_1$, $x_2$) of a 2-$d$ Rosenbrock density, as the parameters $\mu$, $a$ and $b$ take different values. For comparison, the central plot represents a kernel with the original values of the parameters, i.e. $\mu=1$, $a=1/20$ and $b=100/20$.}
	\label{2drosen-parComparison}
\end{figure}

Moreover, the structure of the 2-$d$ problem is such that we can use the chain rule of probability to split the joint density into its conditional densities as factors. We can then sample from each conditional distribution independently, and calculate estimates, credibility regions,  QQ-plots and more. Given these desirable properties, it is highly convenient to find an $n$-dimensional extension of the 2-$d$ Rosenbrock density that preserves this structure. \\ 

\begin{remark}
	In this work we will only investigate the case where the kernels are normal and connected via the mean of the second kernel, and where the mean function is the parabola $x_1^2$. Indeed, any polynomial in $x_1$ can be used as mean function in the $x_2|x_1$ kernel, as well as other functions such as $\exp ( x_1 )$, $\sin ( x_1 )$, $1/x_1$. In fact, any function $f(x_1): \mathbb{R} \to E \subseteq \mathbb{R}$ that does not alter the behaviour of the integrals in the proof of Proposition \ref{2drosen-proposition} is a viable candidate as mean function. Furthermore, as long as the same conditions are satisfied, kernels other than normal can be used. For example, the following joint distribution uses a uniform kernel:
	\begin{equation}
	f(x_1)f(x_2|x_1) = \frac{1}{\sqrt{2 \pi}} e^\frac{-x_1^2}{2}   \frac{1}{x_1^2} \mathbbm{1}_{[0,x_1^2]}(x_2) \, .
	\end{equation}
	These considerations also apply to the $n$-dimensional distribution we propose in Section \ref{hybridrosen}.
\end{remark}

In light of this, we are able to explain why the Full Rosenbrock kernel changes shape as its dimension increases. For simplicity, we will illustrate our point using a 3-$d$ Full Rosenbrock model. From Equation \eqref{3dfull-rosen} we can derive the following general expression:
\begin{equation}
\label{3dfull-rosen2}
\pi(\mathbf{x}) \propto \exp \left \{ - a(x_1-\mu_1)^2 - b (x_2-x_1^2)^2 - c (x_2-\mu_2)^2 - d(x_3-x_2^2)^2  \right \} \, ,
\end{equation}
where $\mathbf{x} \in \mathbb{R}^3$ and $a,b,c,d \in \mathbb{R}^+$, $\mu_1, \mu_2 \in \mathbb{R}$. While the first and fourth terms are two normal kernels in $x_1$ and $x_3$ and can be easily isolated, the $x_2$ kernel is now composed of two terms. Consequently, the integral of \eqref{3dfull-rosen2} with respect to $x_3$, does not yield Equation \eqref{2drosen}. 
In order to obtain a more compact expression for the kernel in the variable $x_2$, we expand the second and third terms of Equation \eqref{3dfull-rosen2} to a sum of monomials, and complete the squares by adding and subtracting the necessary terms: 
\begin{align}
\label{squares} 
- b & (x_2-x_1^2)^2  -c (x_2-\mu_2)^2 = \nonumber \\
%
%
%
& = - \frac{1}{2} \left ( \frac{x_2-\frac{2bx_1^2+2c\mu_2}{2b+2c} }{\frac{1}{\sqrt{2b+2c} } } \right )^2 - \frac{(2bx_1^2+2c\mu_2)^2}{2(2b+2c)} -\frac{1}{2} (2bx_1^4+2c\mu_2^2) \, ,
\end{align}
which can be substituted back in \eqref{3dfull-rosen2}. The first term in \eqref{squares} represents the new normal kernel for $x_2$, i.e.
\begin{equation}
\label{ydist}
x_2|x_1 \sim \mathcal{N} \left (\frac{(2bx_1^2+2c\mu_2)^2}{2b+2c}, \frac{1}{2b+2c} \right ) \, .
\end{equation}
This kernel is not influenced by the $x_2$ variable present in the last term of \eqref{3dfull-rosen2} as it disappears after integrating in the variable $x_3$, as we showed in the proof of Proposition \ref{2drosen-proposition}. The other terms in \eqref{squares} are remaining terms from the calculations that we cannot simply include in the normalising constant, as they depend on $x_1$. This changes the kernel of the variable $x_1$, whose distribution is now unknown and cannot be sampled from directly.
The variable $x_2|x_1$ also changes shape drastically. Looking at Equation \eqref{ydist}, the value of the variance of $x_2|x_1$ changes from $1/2b$ to $1/(2b+2c)$, producing the effect
observed in Figure \ref{2drosenCI}.

These considerations extend to higher dimensions ($n>3$), where every time the dimension of the model is increased to $n+1$ according to the scheme in \eqref{fullRosen}, the kernels of the variables $x_1, \ldots, x_n$ change as described above. \\

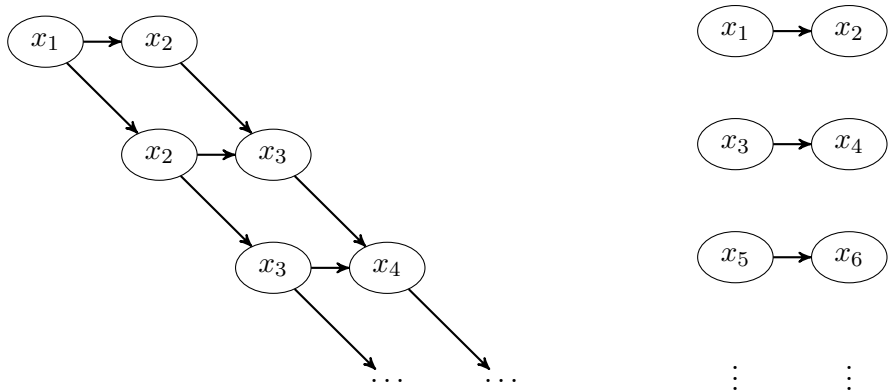
\begin{figure}[ht]
	\center
	\tikzstyle{VertexStyle} = [shape = ellipse, minimum width = 6ex, draw] 
	\tikzstyle{EdgeStyle}   = [->,>=stealth'] 
	\begin{minipage}{.4\textwidth}
	\center
	\begin{tikzpicture}[scale=1.5] 
	\SetGraphUnit{1} 
	\SetVertexMath
	\Vertex[L=x_1]{1} \EA[L=x_2](1){2} 
	\SOEA[L=x_2](1){3} \EA[L=x_3](3){4}
	\SOEA[L=x_3](3){5} \EA[L=x_4](5){6}
	\begin{scope}[VertexStyle/.style = {draw=none}]
	\SOEA[L=\ldots](5){7} \EA[L=\ldots](7){8}
	\end{scope}
	%
	%
	\Edges(1,2) \Edges(1,3,4) \Edges(2,4)
	\Edges(3,5) \Edges(5,6) \Edges(4,6)
	\Edges(5,7) \Edges(6,8) 
	\end{tikzpicture}
	\end{minipage}
	\begin{minipage}{.4\textwidth}
	\center
	\begin{tikzpicture}[scale=1.5] 
	\begin{scope}[node distance=2mm and 10mm]
	\SetGraphUnit{1} 
	\SetVertexMath
	\Vertex[L=x_1]{1} \EA[L=x_2](1){2} 
	\SO[L=x_3](1){3} \EA[L=x_4](3){4}
	\SO[L=x_5](3){5} \EA[L=x_6](5){6}
	\begin{scope}[VertexStyle/.style = {draw=none}]
	\SO[L=\vdots](5){7} \EA[L=\vdots](7){8}
	\end{scope}
	\Edges(1,2) \Edges(3,4) \Edges(5,6) 
	\end{scope}
	\end{tikzpicture}
	\end{minipage}
	\caption{Graphical models representing the dependency structure of the Full Rosenbrock (left) and Even Rosenbrock (right) dependency structure. The circles represent the individual kernels of each variable, while the edges represent the direct dependence between the kernels.}
	
	\label{GMfullNeven}
\end{figure}

Once we understand how the individual kernels depend on each other, we can use a simple graphical model to represent the dependency structure of the Full and Even Rosenbrock kernels, as shown in Figure \ref{GMfullNeven}. The circles represent the kernels of each variable, while the edges represent the direct dependence relationship between the kernels.

\section{Hybrid Rosenbrock distribution} 
\label{hybridrosen}



The overall goal of this paper is to find a $n$-dimensional benchmark model that fulfils the criteria outlined in Section \ref{Int}, i.e. $i$) it must have complex marginal density structure; $ii$) it must be easily extendable to arbitrary dimensions; $iii$) the normalising constant must be known; $iv$) the effect of the parameters on the shape of the distribution must be clear; $v$) it must be easy to obtain a direct Monte Carlo sample from the test distribution.
These properties are vital for a suitable benchmark distribution. 
Furthermore, we want to allow different variables to have very different variances, a property that occurs in the Full Rosenbrock kernel (Section \ref{fullrosen}). This property, coupled with the non-linear correlation structure in the 2-$d$ marginals presents a challenging problem for most MCMC algorithms.

The Hybrid Rosenbock density fulfils all of the outlined criteria, providing a model where every single 2-$d$ marginal distribution has a complex dependency structure. Its kernel can be written as:
\begin{equation}
\label{ndrosen-hybrid2}
\pi(\mathbf{x}) \propto \exp \left \{ - a (x_{1}-\mu)^2 - \sum_{j=1}^{n_2} \sum_{i=2}^{n_1} b_{j,i} (x_{j,i} - x_{j,i-1}^2)^2 \right \} \, ,
\end{equation}
where $\mu, x_{j,i} \in \mathbb{R}$; $a,b_{j,i} \in \mathbb{R}^+$ ($\forall j,i$), and where the final dimension of the distribution is given by the formula $n=(n_1-1)n_2+1$. 

The dependency structure between the components $x_1,\ldots,x_{n_2,n_1}$ of the Hybrid Rosenbrock distribution can be represented with a graphical model, as shown in Figure \ref{GMhybrid}, which can then be compared with the structure of the models from Section \ref{currentlit}, shown in Figure \ref{GMfullNeven}. 

\begin{figure}[ht]
	\center
	\tikzstyle{VertexStyle} = [shape = ellipse, minimum width = 6ex, draw] 
	\tikzstyle{EdgeStyle}   = [->,>=stealth'] 
	
	%
	
	\begin{tikzpicture}[scale=1.5] 
	\SetGraphUnit{1} 
	\SetVertexMath
	\Vertex[L=\; x_1 \;]{1} \NOEA[L=x_{1,2}](1){2} \EA[L=x_{1,3}](2){3} 
	\begin{scope}[VertexStyle/.style = {draw=none}]
	\EA[L=\ldots](3){31}
	\end{scope}
	\EA[L=x_{2,2}](1){4} \EA[L=x_{2,3}](4){5}
	\begin{scope}[VertexStyle/.style = {draw=none}]
	\EA[L=\ldots](5){51}
	\end{scope}
	\begin{scope}[VertexStyle/.style = {draw=none}]
	\SOEA[L=\ldots](1){61}
	\end{scope}
	\Edges(1,2,3,31) \Edges(1,4,5,51) \Edges(1,61)
	\end{tikzpicture}
	\caption{Graphical model representing the dependency structure of the Hybrid Rosenbrock distribution. The circles represent the kernels of each variable, while the edges represent the direct dependence between the kernels.}
	
	\label{GMhybrid}
\end{figure}
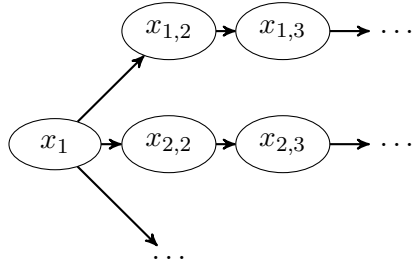
Each ``row" of the diagram in Figure \ref{GMhybrid} represents a ``block of variables". 
The index $i$  in Equation \eqref{ndrosen-hybrid2} identifies variables within a ``block", with $n_1$ denoting the block size, while the index $j$ identifies a single block amongst the $n_2$ blocks present. The indices on the coefficients $b_{j,i}$ follow the block structure, as do the indices on the random variables $x_{j,i}$. The variable $x_{j,1} = x_{1}, \forall j = 1,\dots,n_2$ only has one index, as it is common to all blocks. 

Figure \ref{5drosen-hyb2} shows contour plots obtained from a Monte Carlo sample of the kernel in Equation \eqref{ndrosen-hybrid2}, taking $n_2=2, n_1=3$, and $\mu=1$, $a=1/20$ and $b_{j,i}=100/20$ ($\forall j,i$), i.e.
\begin{align}
\label{5dRosenHyb}
\pi(\mathbf{x}) \propto \exp \bigg \{ - a (x_{1}-\mu)^2 
& - b_{1,2} (x_{1,2} - x_{1}^2)^2 - b_{1,3} (x_{1,3} - x_{1,2}^2)^2 \nonumber \\
& - b_{2,2} (x_{2,2} - x_{1}^2)^2 - b_{2,3} (x_{2,3} - x_{2,2}^2)^2 \bigg \} \, , \qquad \mathbf{x} \in \mathbb{R}^5 \, .
\end{align}

\begin{figure}[!h] 
	\includegraphics[scale=0.75]{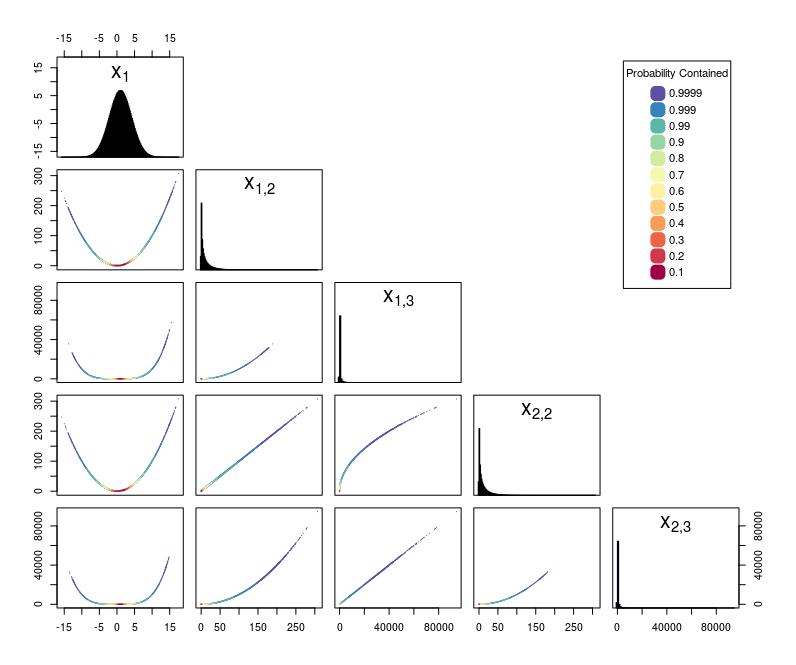}
	\centering
	\caption{Contour plot of a $(n_1, n_2)=(3,2)$ Hybrid Rosenbrock density as described in Equation \eqref{ndrosen-hybrid2}, with parameters $a=1/20$, $b_{j,i}=100/20$ ($\forall j,i$), $\mu=1$, obtained via direct sampling. Every joint distribution is either a straight or curved ridge. }
	\label{5drosen-hyb2}
\end{figure}

The Hybrid kernel inherits from the Full Rosenbrock kernel the feature of having variables with very different variances, as can be observed in the scales of the plots in Figure \ref{5drosen-hyb2}. 
Moreover, as opposed to the Even Rosenbrock contours shown in Figure \ref{4drosen-even}, no plot in Figure \ref{5drosen-hyb2} presents trivial correlation structure: every 2-$d$ marginal is a straight or curved ridge with very long tails.
At the same time, the Hybrid kernel inherits from the Even Rosenbrock kernel the block structure, which makes sure that as $n$ grows, the variance of $x_n|x_{n-1}$ is computationally stable to calculate. 
Notably, all the factors of the distribution are still conditionally or unconditionally normal, which allows us to calculate the normalising constant in the same way as in Proposition \ref{2drosen-proposition}. \\

\begin{proposition}
	\label{constHybridProposition}
	The normalisation constant of the Hybrid Rosenbrock kernel given in Equation \eqref{ndrosen-hybrid2} with $n=(n_1-1)n_2+1$ is 
	\begin{equation*}
	\frac{\sqrt{a} \; \prod_{i=2,j=1}^{n_1,n_2} \sqrt{b_{j,i}} }{\pi^{n/2}} \, .
	\end{equation*}
\end{proposition}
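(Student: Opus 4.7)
The plan is to generalize the change-of-variables argument used in the proof of Proposition \ref{2drosen-proposition}. The graphical model in Figure \ref{GMhybrid} shows that the Hybrid Rosenbrock factorises along a tree rooted at $x_1$, with $n_2$ disjoint chains branching off: one root kernel in $x_1$ and, for every edge $(j,i-1)\to(j,i)$ with $j=1,\dots,n_2$ and $i=2,\dots,n_1$, a Gaussian factor $\exp\{-b_{j,i}(x_{j,i}-x_{j,i-1}^2)^2\}$. This tree structure is the key: because each variable $x_{j,i}$ with $i\ge 2$ is a ``leaf" or ``internal node" of a chain that nowhere else feeds back into the rest of the distribution, the integrals can be computed sequentially from the leaves inward, exactly as in the 2-$d$ case.

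Concretely, I would first fix $x_1$ and each block $j$, and integrate the variables $x_{j,n_1}, x_{j,n_1-1},\ldots,x_{j,2}$ in that order. At each step I apply the substitution $v_{j,i}=x_{j,i}-x_{j,i-1}^2$ (for fixed $x_{j,i-1}$, this is a translation with unit Jacobian), reducing the $i$-th integral to $\int_{\mathbb{R}} e^{-b_{j,i}v_{j,i}^2}\,dv_{j,i}=\sqrt{\pi/b_{j,i}}$. The crucial observation is that after integrating $x_{j,i}$ out, the result does not depend on $x_{j,i-1}$ (nor on any other remaining variable), so the next substitution in the same chain is legitimate and the integrals decouple. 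Iterating down each of the $n_2$ chains contributes a factor $\prod_{j=1}^{n_2}\prod_{i=2}^{n_1}\sqrt{\pi/b_{j,i}}$, still independent of $x_1$.

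The outermost integral is then $\int_{\mathbb{R}}e^{-a(x_1-\mu)^2}dx_1=\sqrt{\pi/a}$, and multiplying gives
\begin{equation*}
\int_{\mathbb{R}^n}\exp\left\{-a(x_1-\mu)^2-\sum_{j=1}^{n_2}\sum_{i=2}^{n_1}b_{j,i}(x_{j,i}-x_{j,i-1}^2)^2\right\}d\mathbf{x}=\sqrt{\frac{\pi}{a}}\prod_{j=1}^{n_2}\prod_{i=2}^{n_1}\sqrt{\frac{\pi}{b_{j,i}}}.
\end{equation*}
The total number of $\sqrt{\pi}$ factors is $1+(n_1-1)n_2=n$, so the right-hand side collapses to $\pi^{n/2}/\bigl(\sqrt{a}\prod_{j,i}\sqrt{b_{j,i}}\bigr)$. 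Taking the reciprocal yields the claimed normalising constant $\sqrt{a}\,\prod_{i=2,j=1}^{n_1,n_2}\sqrt{b_{j,i}}\,/\,\pi^{n/2}$.

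I do not expect any serious analytic obstacle; everything reduces to a composition of Proposition \ref{2drosen-proposition}-style arguments along the tree. The only point that requires care is justifying the order of integration and verifying that the reason the analogous computation \emph{failed} for the Full Rosenbrock in Equations \eqref{3dfull-rosen2}--\eqref{squares} (namely, that $x_2$ appeared in two competing kernels) genuinely does not occur here: each variable $x_{j,i}$ with $i\ge 2$ appears in exactly one factor on the left of the chain plus at most one on the right, and in the Hybrid construction the ``right-hand" factor has already been integrated out by the time we reach $x_{j,i}$. Making this bookkeeping explicit, together with the counting that gives the exponent $n/2$, constitutes the full proof.
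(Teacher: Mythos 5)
Your proposal is correct and follows essentially the same route as the paper's own proof in Appendix \ref{proof:constHybridProposition}: integrate each chain from its last variable inward using the substitution $v_{j,i}=x_{j,i}-x_{j,i-1}^2$ to collect a factor $\sqrt{\pi/b_{j,i}}$ per edge, finish with $\sqrt{\pi/a}$ for the root kernel, and count the $n=(n_1-1)n_2+1$ Gaussian factors before taking the reciprocal. Your explicit remark that the Full Rosenbrock failure mode (a variable appearing in two competing kernels) cannot occur here is a nice clarification but not a departure from the paper's argument.
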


\begin{proof}
The proof is similar to that of Proposition \ref{2drosen-proposition}, where we use the conditional structure of the density to split the integrals of the normal kernels, and solve them one at a time. The details are shown in Appendix \ref{proof:constHybridProposition}.
\end{proof}

Using the conditional normal structure of the model, it is possible to obtain an $iid$ Monte Carlo sample from the joint distribution by using the following algorithm.

\begin{center}
	\begin{algorithm}[H]
		\setstretch{1.5}
		\SetAlgoLined
		\For{$k=1, \dots, N$}{
			$X_{1} \sim \mathcal{N} \left( \mu, \frac{1}{2 a} \right )$\\
			\For{$j=1, \ldots, n_2$}{
				\For{$i=2, \ldots, n_1$}{
					$X_{j,i} | X_{j,i-1} \sim \mathcal{N} \left( x_{j,i-1}^2, \frac{1}{2 b_{ji} } \right)$ \\	
				}	
			}
			$X_{(k)} = (X_{(1)}, X_{1,2}, \dots, X_{n_2,n_1})$
		}

		\Return $\left(X_{(1)},\dots,X_{(N)} \right)$ 
		\label{alg:1}
		\caption{Pseudocode to sample from a Hybrid Rosenbrock Distribution}
	\end{algorithm}
	
\end{center}

\section{Numerical tests}
\label{numtest}

In this section we conduct numerical experiments to complement our theoretical statements made so far. In Section \ref{modelValidation} we assess whether a well tuned MCMC algorithm converges to the theoretical results we obtained in Section \ref{hybridrosen}. In Section \ref{sensitivityAnalysis} we perform empirical tests on how the values of the parameters of the Hybrid Rosenbrock distribution influence the performance of MCMC algorithms sampling from it.

\subsection{Model validation}
\label{modelValidation}

We performed our validation tests on the $(n_1, n_2)=(3,2)$ Hybrid Rosenbrock distribution described in \eqref{5dRosenHyb}, i.e.
\begin{align}
\label{5dRosenHyb2}
\pi(\mathbf{x}) \propto \exp \bigg \{ - a (x_{1}-\mu)^2 
& - b_{1,2} (x_{1,2} - x_{1}^2)^2 - b_{1,3} (x_{1,3} - x_{1,2}^2)^2 \nonumber \\
& - b_{2,2} (x_{2,2} - x_{1}^2)^2 - b_{2,3} (x_{2,3} - x_{2,2}^2)^2 \bigg \} \, , \qquad \mathbf{x} \in \mathbb{R}^5 \, ,
\end{align}
with $\mu=1$, $a=1/20$ and $b_{j,i}=100/20$, $i=2,3$, $j=1,2,3$. We selected this specific target as without being overly challenging for our computational resources, it presents all the main features of the Hybrid Rosenbrock distribution: it is composed of multiple blocks with multiple variables per block. We compared a sample drawn from the kernel above following Algorithm 1, with a sample from the same distribution drawn using a sMMALA algorithm.

Algorithm 1 was run to obtain 2 million samples, while the algorithm sMMALA was run with $\alpha=10^6$, step size $h=.3$, for $20$ million samples. The step size $h$ was chosen so that sMMALA would achieve an acceptance rate of roughly $50\%$. For computational reasons, we then reduced the final number of sMMALA samples to 2 millions by discarding nine out of each ten samples that we obtained from the algorithm sMMALA. The QQ-plots for each variable in Equation \eqref{5dRosenHyb2} are showed in Figure \ref{qq5dNasty}.

\begin{figure}[!h] 
	\includegraphics[scale=0.7]{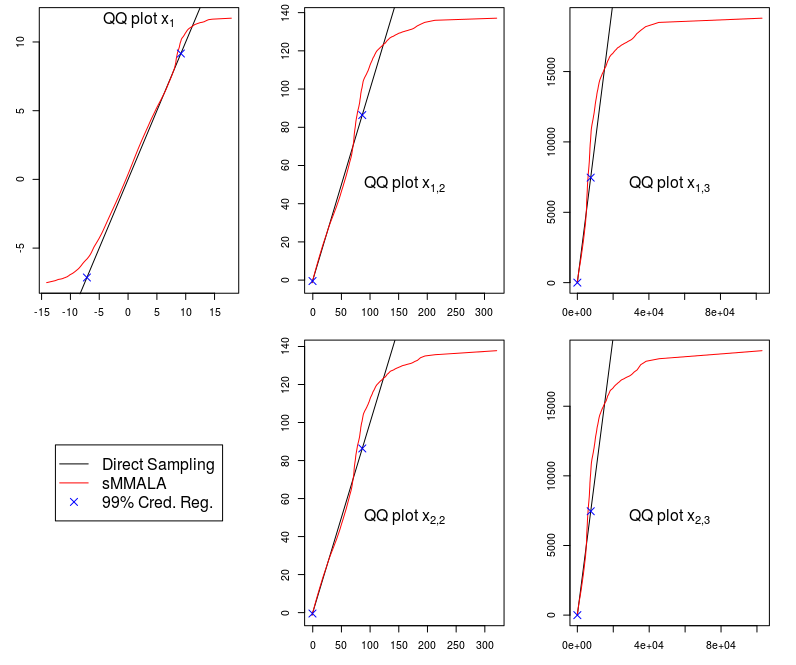}
	\centering
	\caption{QQ-plots for each variable of Equation \ref{5dRosenHyb2}. The horizontal axis show the quantiles obtained from direct Monte Carlo sampling, while the vertical axis shows the quantiles calculated from the sMMALA MCMC sample.}
	\label{qq5dNasty}
\end{figure}

The top left plot in Figure \ref{qq5dNasty} shows the QQ-plot for the variable $x_1$. The red line, representing the empirical quantiles obtained from the sMMALA sample, closely follows the black line, which represents the empirical quantiles calculated from Algorithm 1. Even more so when taking into consideration the 99\% credibility region, where the black and red lines almost completely overlap. 

The variables $x_{1,2}$ and $x_{2,2}$, shown in the middle plots, have tails that stretch moderately far from the mode. Again, the algorithm sMMALA agrees quite well with the sample from Algorithm 1: the red line diverges from the black only near the upper edge of the plot, much further away from the mode than the blue cross representing the empirical 99\% credibility region of the density.

The last two plots on the right side of Figure \ref{qq5dNasty} show the QQ-plots for variables $x_{2,2}$ and $x_{2,3}$, which have tails that reach very far from the mode. Once again, the results from sMMALA and Algorithm 1 are in agreement.

The only discrepancy between the quantiles of Algorithm 1 and sMMALA is in the farthest areas of the tails region of the target. This is due to MCMC algorithms having troubles visiting the tails and returning to the mode efficiently, while direct Monte Carlo sampling does not suffer from this drawback.

In order to control for the Monte Carlo error originating from Algorithm 1, we repeated the same experiment with four million samples taken from Algorithm 1, instead of two. The results in Figure \ref{qq5dNasty} did not change significantly, leading us to believe that the Monte Carlo error that Algorithm 1 introduces in our analysis is negligible.

\subsection{Sensitivity analysis}
\label{sensitivityAnalysis}

In this section we investigate how varying the parameters $n_1, n_2, \mu, a, b_{j,i}$ $(\forall j,i)$ influences the performance of MCMC algorithms sampling from the Hybrid Rosenbrock distribution. 

Our analysis consists in comparing the integrated autocorrelation time $\tau$ calculated using MCMC samples from models with different sets of parameters. The value of $\tau$ roughly measures how many steps on average an MCMC algorithm has to take from an initial position $\mathbf{x}$ before it returns a sample that is completely uncorrelated with $\mathbf{x}$. As we test the same MCMC algorithm on different models, studying how $\tau$ varies for each model provides insights into how easy it is to sample from that model via MCMC (see \cite{goodman2010} and references therein).

To obtain the MCMC samples we rely on a sMMALA algorithm, tuned with $\alpha=10^6$ and acceptance ratio fixed at roughly 50\%. As all our models are multidimensional, each MCMC sample yields a vector of $n$ autocorrelation times $\tau_i$, one for each component of the state space, where $\tau_i$ is defined as
\begin{equation*}
\tau_i = 1 + 2 \sum_{l=1}^L \mbox{corr} (y_i^0, y_i^l ), \quad i=1,\ldots,n \, ,
\end{equation*}
where $\mathbf{y}_i$ is the MCMC sample from the $i$th component of the state space, and $L$ is an integer number representing the last lag where the sample autocorrelation is significantly different from zero. 
We then record only the highest autocorrelation time amongst all components:
\begin{equation*}
\tau = \max_{i=1,\ldots,n} \tau_i \, .
\end{equation*}
Naturally, the smaller the value of $\tau$, the better the algorithm mixes. Assuming the autocorrelation in an MCMC sample is always non-negative, an algorithm that generates $\emph{iid}$ samples achieves the smallest possible value of $\tau$, i.e. $\tau=1$.


In the remainder of this section, we will test six separate distribution structures or models, indexed by the parameters $n_1, n_2$, and for each of them, we will vary the parameters $\mu, a, b_{j,i}$ $(\forall j,i)$ to change the model's shape. For simplicity, we will fix $b_{j,i} = b$, $\forall j,i$. The six models are represented in Figure \ref{GM2drosen}.


\begin{figure}[ht]
	\center
	\tikzstyle{VertexStyle} = [shape = ellipse, minimum width = 6ex, draw] 
	\tikzstyle{EdgeStyle}   = [->,>=stealth'] 
	
	\begin{minipage}[t]{.5\textwidth}
	\center
	\begin{tikzpicture}[scale=1.5] 
	\SetGraphUnit{1} 
	\SetVertexMath
	\Vertex[L=\; x_{1} \;]{1} \EA[L=x_{1,2}](1){2}
	\Edges(1,2)
	\end{tikzpicture}
	\caption*{Model 1: $n_2=1$, $n_1=2$.}
	\end{minipage}
	\vspace{0.1cm}
%
	\begin{minipage}[t]{.4\textwidth}
	\center
	\begin{tikzpicture}[scale=1.5] 
	\SetGraphUnit{1} 
	\SetVertexMath
	\Vertex[L=\; x_{1} \;]{1} \EA[L=x_{1,2}](1){2} \SOEA[L=x_{1,3}](1){3} 
	\Edges(1,2) \Edges(1,3)
	\end{tikzpicture}
	\caption*{Model 2: $n_2=2$, $n_1=2$.}
	\end{minipage}
	\vspace{0.1cm}
%
	\begin{minipage}[t]{.5\textwidth}
	\center
	\begin{tikzpicture}[scale=1.5] 
	\SetGraphUnit{1} 
	\SetVertexMath
	\Vertex[L=\; x_{1} \;]{1} \EA[L=x_{1,2}](1){2} \EA[L=x_{1,3}](2){3} 
	\Edges(1,2,3)
	\end{tikzpicture}
	\caption*{Model 3: $n_2=1$, $n_1=3$.}
	\end{minipage}
	\vspace{0.1cm}
%
	\begin{minipage}[t]{.4\textwidth}
	\center
	\begin{tikzpicture}[scale=1.5] 
	\SetGraphUnit{1} 
	\SetVertexMath
	\Vertex[L=\; x_{1} \;]{1} \EA[L=x_{1,2}](1){2} \SOEA[L=x_{2,2}](1){3} \SO[L=x_{3,2}](1){4} \SOWE[L=x_{4,2}](1){5} 
	\Edges(1,2) \Edges(1,3) \Edges(1,4) \Edges(1,5)
	\end{tikzpicture}
	\caption*{Model 4: $n_2=4$, $n_1=2$.}
	\end{minipage}
	\vspace{0.1cm}
%
\begin{minipage}[t]{.5\textwidth}
	\center
	\begin{tikzpicture}[scale=1.5] 
	\SetGraphUnit{1} 
	\SetVertexMath
	\Vertex[L=\; x_{1} \;]{1} \EA[L=x_{1,2}](1){2} \EA[L=x_{1,3}](2){3} \EA[L=x_{1,4}](3){4} \EA[L=x_{1,5}](4){5} 
	\Edges(1,2,3,4,5)
	\end{tikzpicture}
	\caption*{Model 5: $n_2=1$, $n_1=5$.}
\end{minipage}
\vspace{0.1cm}
%
	\begin{minipage}[t]{.4\textwidth}
	\center
	\begin{tikzpicture}[scale=1.5] 
	\SetGraphUnit{1} 
	\SetVertexMath
	\Vertex[L=\; x_{1} \;]{1} \EA[L=x_{1,2}](1){2} \EA[L=x_{1,3}](2){3} \SOEA[L=x_{2,2}](1){4} \EA[L=x_{2,3}](4){5} 
	\Edges(1,2,3) \Edges(1,4,5)
	\end{tikzpicture}
	\caption*{Model 6: $n_2=2$, $n_1=3$.}
	\end{minipage}
	\vspace{0.1cm}
	\caption{Graphical models of the six different Hybrid Rosenbrock structures tested in this section.}
	\label{GM2drosen}
\end{figure}

Model 1 corresponds to the 2-$d$ Rosenbrock density, i.e. Equation \eqref{ndrosen-hybrid2} with $n_2=1$ and $n_1=2$, and represents the baseline against which every other model is compared. 

Model 2 ($n_2=2$, $n_1=2$) and Model 3 ($n_2=1$, $n_1=3$) are both 3-$d$ distribution. Model 2 captures the effect of extending the 2-$d$ density by adding an extra block, while Model 3 captures the effect of increasing the number of variables in the same block. We expect Model 3 to be more challenging than Model 2, as the difference between the variance of the variables in Model 3 should be higher than in Model 2. 

Model 4 ($n_2=4$, $n_1=2$) and Model 5 ($n_2=1$, $n_1=5$) are simply larger versions of Model 2 and 3, and they capture the effects that an increase in dimension of the state space has on the sampling algorithm.

Models 2 to 5 are extensions of the 2-$d$ case obtained by only increasing either the number of blocks, or the number of variables in the single block available. Model 6 ($n_2=2$, $n_1=3$) is a fully Hybrid Rosenbrock distribution, with multiple blocks and multiple variables in each block. Its 2-$d$ marginals can be seen in Figure \ref{5drosen-hyb2}.

Model 5 was included for comparison, but it is a viable option only for certain parameter values and only in low dimension. The reason is that as $n$ increases, the variance of $x_{1,n}$ grows too quickly. Using the standard parametrisation ($\mu=1$, $a=1/20$, $b=100/20$), already with $n=10$ some of the values of the sample covariance are so large that computers treat them as infinite even though they are not. Algorithms that rely on the sample covariance matrix or the Hessian to adaptively explore the target would not work properly in that case. This behaviour onsets for even lower values of $n$ if $\mu$ has a value far from zero, and $a$ is small (with respect to the original parametrisation). Hence we recommend using the block structure, to be able to increase $n$ at pleasure while avoiding uncontrolled behaviour, which is particularly undesirable in a test problem.

As the choices for the shape parameters are infinite, based on the considerations made in Section \ref{normalis} and Figure \ref{2drosen-parComparison}, we decided to test the different structures for just a few critical values of $\mu, a$ and $b$. First we will test the six models on the standard parametrisation. Then we will vary the value of the parameters one at a time, and assess the effects on the integrated autocorrelation time of one million samples from a sMMALA algorithm tuned as described above. 
The results of our experiments can be seen in Figure \ref{modelComparison1}.

\begin{figure}[!h] 
	\includegraphics[scale=0.8]{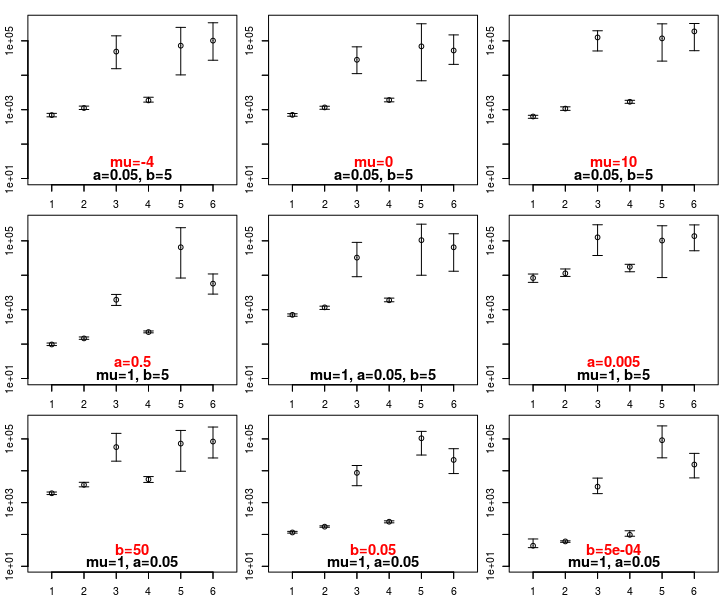}
	\centering
	\caption{Integrated autocorrelation times $\tau$ obtained varying the parameters $\mu,a,b$ for Models 1 to 6, as described in Figure \ref{GM2drosen}. The horizontal axis represents the model number, while the vertical axis show the logarithm of $\tau$. The dot represents the average value, while the whiskers represent the two-sided 95\% credibility region.}
	\label{modelComparison1}
\end{figure}

A quick inspection of the plots reveals that there is a strong tendency for Models 1, 2 and 4 to have similarly low $\tau$, with low variability, while Models 3, 5 and 6 tend to cluster on higher values of $\tau$, with higher variability. This behaviour is explained by the fact that Models 1, 2 and 4 are all parametrised by the same value of $n_1=2$. As pointed out in Section \ref{normalis}, significant differences in the scales of the components start appearing only when $n_1 \ge 3$. On the other hand, Models 3, 5 and 6 all have $n_1 \ge 3$. Therefore the difference between the variance of the various components of the state space is responsible for the increased difficulty of sampling from Models 3, 4 and 5. However, there doesn't appear to be a significant difference between Model 3, 6, both with a value of $n_1=3$, and Model 5, with a value of $n_1=5$. This suggests that increasing the value of $n_1$ beyond three does not significantly impact the overall difficulty of sampling from the model. 

\begin{figure}[!h] 
	\includegraphics[scale=0.8]{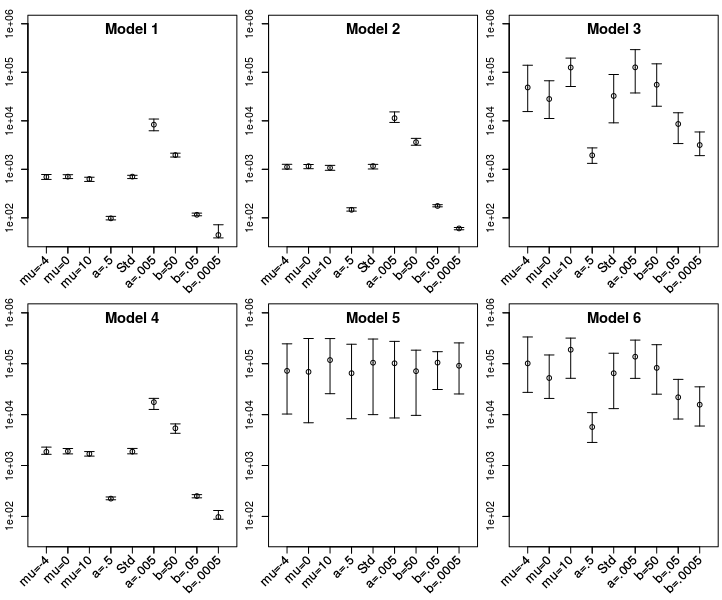}
	\centering
	\caption{Integrated autocorrelation times $\tau$ obtained varying the parameters $\mu,a,b$ for models 1 to 6, as described in Figure \ref{GM2drosen}. The values are identical to Figure \ref{modelComparison1}, but are instead grouped by model, rather than by parameter values. The horizontal axis shows the single parameter value of either $\mu,a$ or $b$ that takes a different value from the standard parametrisation, i.e. $\mu=1, a=1/20$ and $b=100/20$. The dot represents the average value, while the whiskers represent the two-sided 95\% credibility region.}
	\label{modelComparison2}
\end{figure}

Comparing Figure \ref{modelComparison1} with Figure \ref{2drosen-parComparison}, another trend becomes noticeable: models with values of $\mu,a$ and $b$ that yield rounder shapes -- i.e. large values of $a$, small values of $b$ and to a lesser extent, values of $\mu$ near zero -- tend to have lower values of $\tau$. Conversely, values of $\mu,a$ and $b$ that yield narrower and more elongated shapes -- i.e. large $a$, small $b$ and $\mu$ far from zero -- tend to have higher values of $\tau$. This is perhaps more noticeable in Figure \ref{modelComparison2}, where the data is the same as Figure \ref{modelComparison1}, but grouped by model. We have defined the parameters $\mu=1, a=1/20$ and $b=100/20$ as the ``standard parametrisation", and we have showed on the horizontal axis of Figure \ref{modelComparison2} only the single parameter of that model that differs from the standard values. That is to say, if the horizontal axis shows $\mu=-4$, the implied values of $a$ and $b$ are $a=1/20$ and $b=100/20$, while $\mu=-4$ instead of being $\mu=1$.

The one noticeable exception in Figure \ref{modelComparison2} is Model 5, which does not seem to be overly affected by varying the parameters $\mu, a$ and $b$. A likely explanation is that the variance of $x_{1,5}$ is so large that the relative effect that changing $\mu, a$ and $b$ has on the distribution is not noticeable. However, the uncertainty is quite large, so more computationally intensive tests may be needed to pinpoint the exact effects that the parameters $\mu, a$ and $b$ have on the difficulty of sampling from Model 5.

Models 1, 2 and 4 show a clear trend, with very small uncertainty: values of $a=.5, b=.05, b=.0005$ lead to lower values of $\tau$, while values of $a=.005$ and $b=50$ lead to higher values of $\tau$. Models 3 and 6 show the same trend, particularly evident on $a=.5$, albeit with larger uncertainty.

\section{Conclusions}

The 2-$d$ Rosenbrock distribution is a common benchmark problem in Markov Chain Monte Carlo sampling, when testing algorithms on densities with curved 2-$d$ marginal densities. However, as its normalising constant is not generally known, it can be hard to precisely assess the quality of the results. This is particularly true for distributions of higher dimensions, as the optimisation literature provides multiple ways to extend the 2-$d$ Rosenbrock function, but neither shape nor statistical properties of the resulting distributions are well documented. This may lead to confusion in the interpretation of the results, as it can seem that an algorithm is working appropriately while it is struggling to simulate entire regions of the distribution's support.

In this paper we have provided the normalising constant for the 2-$d$ Rosenbrock density, by splitting the density into conditional normal kernels. This property can also be used to obtain a direct Monte Carlo sample from the density. Furthermore, we showed that by carefully extending the 2-$d$ distribution to $n$ dimensions, it is possible to obtain a test problem with some very appealing features. Firstly, it is well defined and its statistical properties can be derived by simple integration. Secondly, the problem has a very challenging structure with all the 2-$d$ marginal distributions appearing as straight or curved ridges. Lastly, the variables have highly different scales, a feature that increases the difficulty of the test problem. Furthermore, we characterised the effect of the parameters on the shape of the distribution, which can be changed by algorithm developers to provide a test problem with the exact properties and dimension needed.
These properties also qualify the Hybrid Rosenbrock distribution as a good benchmark model for the computation of normalising constants.

Finally, we have verified the accuracy of the proposed distributions and their adequacy as a challenging benchmark problem with numerical experiments, where the performance assessment was made tremendously more accurate by the availability of analytic solutions for the Hybrid Rosenbrock density.


\section*{Acknowledgements}
FP would like to thank Dr. Tim Waite and Dr. Simon Cotter for their many useful comments, and the Department of Mathematics at the University of Manchester for PhD funding.

\clearpage

\appendix

\section{Details of proof of Proposition \ref{constHybridProposition} }
\label{proof:constHybridProposition}

The integral of Equation \eqref{ndrosen-hybrid2} over the domain $\mathbb{R}^n$ is

\begin{align}
\int_{\mathbb{R}^n} \exp & \left \{ - a (x_{1}-\mu)^2 - \sum_{j=1}^{n_2} \sum_{i=2}^{n_1} b_{j,i} (x_{j,i} - x_{j,i-1}^2)^2 \right \}  \, dx_{n_2,n_1} \dots dx_1= \\
& = \int_\mathbb{R}  \exp \left \{ -a (x_1-\mu)^2 \right \} \prod_{j = 1}^{n_2} \prod_{i = 2}^{n_1} \int_\mathbb{R} \exp \left \{ - b_{j,i} (x_{1,i}-x_{1,i-1}^2)^2 \right \} \, dx_{n_2,n_1} \cdots \, dx_{1} \, , \nonumber
\end{align}
by splitting the terms in the exponential function. We can now isolate the last integral, with indices $j=n_2$ and $i=n_1$, as
\begin{align}
\label{isolateIntegral1mv}
= \int_\mathbb{R} \exp & \left \{ -a (x_1-\mu)^2 \right \} \prod_{j = 1}^{n_2} \prod_{i = 2}^{n_1-1} \int_\mathbb{R} \exp \left \{ - b_{j,i} (x_{1,i}-x_{1,i-1}^2)^2 \right \} \times \\
& \times \int_\mathbb{R} \exp \left \{ - b_{n_2,n_1} (x_{n_2,n_1}-x_{n_2,n_{n_1}-1}^2)^2 \right \} \, dx_{n_2,n_1} dx_{n_2,n_{1}-1} \cdots \, dx_{1} \, . \nonumber
\end{align}
From Proposition \ref{2drosen-proposition} we know that by changing variables $v_{n_2,n_1} = x_{n_2,n_1}-x_{n_2,n_1}^2$,
\begin{equation*}
\int_\mathbb{R} \exp \left \{ - b_{n_2,n_1} (x_{n_2,n_1}-x_{n_2,n_{n_1}-1}^2)^2 \right \} \, dx_{n_2,n_1} = \sqrt{\frac{\pi}{b_{n_2,n_1}}} \, .
\end{equation*}
We can substitute this result back into \eqref{isolateIntegral1mv}, which becomes
\begin{equation}
\label{isolateIntegral2mv}
= \sqrt{\frac{\pi}{b_{n_2,n_1}}} \int_\mathbb{R} \exp  \left \{ -a (x_1-\mu)^2 \right \} \prod_{j = 1}^{n_2} \prod_{i = 2}^{n_1-1} \int_\mathbb{R} \exp \left \{ - b_{j,i} (x_{1,i}-x_{1,i-1}^2)^2 \right \} dx_{n_2,n_{1}-1} \cdots \, dx_{1} \, .
\end{equation}
We can apply the same procedure to all the integrals in Equation \eqref{isolateIntegral2mv} in turn, starting from the remaining last variable $n_1-1$ of the last block $n_2$, until the very first variable $x_1$. This operation yields
\begin{equation*}
\int_{\mathbb{R}^n} \exp \left \{ - a (x_{1}-\mu)^2 - \sum_{j=1}^{n_2} \sum_{i=2}^{n_1} b_{j,i} (x_{j,i} - x_{j,i-1}^2)^2 \right \}  \, dx_{n_2,n_1} \dots dx_1 = \frac{\pi^{n/2}}{\sqrt{a} \; \prod_{i=2,j=1}^{n_1,n_2} \sqrt{b_{j,i}} } \, .
\end{equation*}
Taking the reciprocal we obtain the normalisation constant in the statement of Proposition \ref{constHybridProposition}.

\end{document}